\newcommand{\prl}{Phys. Rev. Lett.~}
\newcommand{\pra}{Phys. Rev. A~}
\newcommand{\pla}{Phys. Lett. A~}
\definecolor{myurlcolor}{rgb}{0,0,0.7}% Color highlighted for numbers in citations and theorems
\newcommand{\blue}{\textcolor{blue}}
\newcommand{\proj}[1]{| #1\rangle\!\langle #1 |}
\newcommand{\tinyspace}{\mspace{1mu}}
\newcommand{\op}[1]{\operatorname{#1}}
\newcommand{\abs}[1]{\left\lvert\tinyspace #1 \tinyspace\right\rvert}
\renewcommand{\t}{{\scriptscriptstyle\mathsf{T}}}
\newcommand{\setft}[1]{\mathrm{#1}}
\newcommand{\lin}[1]{\setft{L}\left(#1\right)}
\newcommand{\density}[1]{\setft{D}\left(#1\right)}
\renewcommand{\vec}{\op{vec}}
\newcommand{\rank}{\op{rank}}
\def \diag {\mathrm{diag}}
\def\complex{\mathbb{C}}
\def\real{\mathbb{R}}
\def\I{\mathbb{1}}
\newenvironment{mylist}[1]{\begin{list}{}{
    \setlength{\leftmargin}{#1}
    \setlength{\rightmargin}{0mm}
    \setlength{\labelsep}{2mm}
    \setlength{\labelwidth}{8mm}
    \setlength{\itemsep}{0mm}}}
    {\end{list}}
\def\ot{\otimes}
\newcommand{\out}[2]{| #1\rangle\langle #2 |}
\newcommand{\Inner}[2]{\left\langle #1 , #2\right\rangle}
\newcommand{\Innerm}[3]{\left\langle #1 \left| #2 \right| #3 \right\rangle}
\newcommand{\Pa}[1]{\left(#1\right)}
\newcommand{\Br}[1]{\left[#1\right]}
\newcommand{\set}[1]{\{#1\}}
\newcommand{\Set}[1]{\left\{#1\right\}}
\newcommand{\ket}[1]{|#1\rangle}
\DeclareMathOperator{\trace}{Tr}
\newcommand{\Ptr}[2]{\trace_{#1}\Pa{#2}}
\newcommand{\Tr}[1]{\Ptr{}{#1}}
\newcommand{\Abs}[1]{\left|\tinyspace#1\tinyspace\right|}
\def\cH{\mathcal{H}}
\def\cN{\mathcal{N}}
\def\cX{\mathcal{X}}\def\cY{\mathcal{Y}}
\def\bsA{\boldsymbol{A}}\def\bsD{\boldsymbol{D}}
\def\bsF{\boldsymbol{F}}
\def\bsU{\boldsymbol{U}}\def\bsV{\boldsymbol{V}}\def\bsW{\boldsymbol{W}}\def\bsX{\boldsymbol{X}}
\def\bsp{\boldsymbol{p}}
\def\bsu{\boldsymbol{u}}\def\bsx{\boldsymbol{x}}
\def\U{\textsf{U}}
\def\mC{\mathbb{C}}
\newtheorem{thrm}{Theorem}[section]
\newtheorem{prop}[thrm]{Proposition}
\newtheorem{cor}[thrm]{Corollary}
\theoremstyle{definition}
\numberwithin{equation}{section}
\newcounter{questionnumber}
\begin{document}

%==============================================================================================%
\title{A characterization of maximally entangled two-qubit states}
%==============================================================================================%

\author{\blue{Junjun Duan}$^1$\,, \blue{Lin Zhang}$^1$\footnote{E-mail: godyalin@163.com}\,, \blue{Quan Qian}$^1$\,, \blue{Shao-Ming Fei}$^{2,3}$\\
  {\it\small $^1$School of Sciences, Hangzhou Dianzi University, Hangzhou 310018, PR~China}
  \\{\it\small $^2$Max Planck Institute for Mathematics in the Sciences, 04103 Leipzig, Germany}\\
  {\it\small $^3$School of Mathematical Sciences, Capital Normal University, Beijing, 100048, PR~China}}
\date{}
\maketitle

\begin{abstract}
As already known by Rana's result
\href{https://doi.org/10.1103/PhysRevA.87.054301}{[\pra {\bf87}
(2013) 054301]}, all eigenvalues of any partial-transposed bipartite
state fall within the closed interval $[-\frac12,1]$. In this note,
we study a family of bipartite quantum states whose minimal
eigenvalues of partial-transposed states being $-\frac12$. For a
two-qubit system, we find that the minimal eigenvalue of its
partial-transposed state is $-\frac12$ if and only if such two-qubit
state must be maximally entangled. However this result does not hold
in general for a two-qudit system when the dimensions of the
underlying space are larger than two.
\end{abstract}

%\tableofcontents

%===================================================%
\section{Introduction}
%===================================================%

Let $\rho_{AB}$ be a quantum state in a bipartite quantum system
$\cH_A\ot\cH_B$, then the positive partial transpose (PPT) criterion
indicates that, for any separable state $\rho_{AB}$, it must hold
$\rho^{\t_A}_{AB}\geqslant0$, where $^{\t_A}$ denotes the partial
transpose on subsystem $A$. This PPT condition is firstly proposed
by Peres \cite{Peres1996}. Such condition is not only a necessary
but also a sufficient one for separability in qubit-qubit sytem and
qubit-qutrit/qutrit-qubit system cases \cite{Horodecki1996}. The PPT
condition can also be verified from the moments of the randomized
measurements \cite{Gray2018,Elben2020,Zhou2020}.

Recently, Yu \emph{et al} \cite{Yu2021} found that the PPT condition
can be studied by considering the so-called \emph{partial transpose
moments} (PT-moments)
\begin{eqnarray*}
p_k := \Tr{\Br{\rho^{\t_A}_{AB}}^k}.
\end{eqnarray*}
In fact, these quantities can be efficiently measured in experiments
\cite{Elben2020,Zhou2020}. To see the basic idea behind the
PT-moments-based entanglement detection, suppose that we know all
the PT-moments $\bsp^{(d)}=(p_1,\ldots,p_d)$, where $d=d_Ad_B$ is
the dimension of the global system $\cH_A\ot\cH_B$, where
$d_{A/B}=\dim(\cH_{A/B})$. We call $\bsp^{(d)}$ the PT-moment vector
of the state $\rho_{AB}$. Denote by $(x_1,\ldots,x_d)$ all the
eigenvalues of $\rho^{\t_A}_{AB}$. As already known,
$(x_1,\ldots,x_d)$ determines completely the elementary symmetric
polynomials $(e_1,\ldots,e_d)$, where $e_1=\sum^d_{k=1}x_k$,
$e_2=\sum_{1\leqslant i<j\leqslant d}x_ix_j,\ldots$, and
$e_d=\prod^d_{k=1}x_k$; and reversely $(e_1,\ldots,e_d)$ can
determine $(x_1,\ldots,x_d)$ when ignoring their order. In fact,
$(e_1,\ldots,e_d)$ and $(p_1,\ldots,p_d)$, where $p_k$'s are the
power sum of $x_i$'s, necessarily identify each other via the
following relationship between $e_k$ and $p_k$ \cite{Mac1995}:
\begin{eqnarray*}
p_k = \Abs{\begin{array}{ccccc}
             e_1 & 1 & 0 & \cdots & 0 \\
             2e_2 & e_1 & 1 & \cdots & 0 \\
             \vdots & \vdots & \vdots & \ddots & \vdots \\
             (k-1)e_{k-1} & e_{k-2} & e_{k-3} & \cdots & 1\\
             ke_k & e_{k-1} & e_{k-2} & \cdots & e_1
           \end{array}
}\quad (k\geqslant1)
\end{eqnarray*}
and
\begin{eqnarray*}
e_k = \frac1{k!}\Abs{\begin{array}{ccccc}
             p_1 & 1 & 0 & \cdots & 0 \\
             p_2 & p_1 & 2 & \cdots & 0 \\
             \vdots & \vdots & \vdots & \ddots & \vdots \\
             p_{k-1} & p_{k-2} & p_{k-3} & \cdots & k-1\\
             p_k & p_{k-1} & p_{k-2} & \cdots & p_1
           \end{array}
}\quad (k\geqslant1).
\end{eqnarray*}
Therefore $(p_1,\ldots,p_d)$ determines $(x_1,\ldots,x_d)$ up to
their order. Then, all eigenvalues of the partial-transposed state
$\rho^{\t_A}_{AB}$ can be directly obtained. Based on the above
observation, the PPT criterion can be verified immediately. For
convenience, we always assume that $p_1=1$. In addition, $p_2$ is
just the purity due to the fact that
$\Tr{\Br{\rho^{\t_A}_{AB}}^2}=\Tr{\rho^2_{AB}}$. In \cite{Yu2021},
the authors studied the following problem.

\noindent{\bf PT-Moment problem:} Given the PT-moments of order $n$,
is there a separable state compatible with the data? More
technically formulated, \emph{given} the PT-moment vector
\begin{eqnarray*}
\bsp^{(n)}=(p_1,\ldots,p_n),
\end{eqnarray*}
\emph{is there} a separable quantum state $\rho_{AB}$ such that
\begin{eqnarray*}
p_k = \Tr{\Br{\rho^{\t_A}_{AB}}^k},\quad k=1,\ldots,n?
\end{eqnarray*}
It is natural to consider the detection of entanglement in
$\rho_{AB}$ from a few of the PT-moments due to the difficulty in
measuring all the PT-moments, such as \cite{Yu2021}. Note that the
partial-transposed state $\rho^{\t_A}_{AB}$, for
$\rho_{AB}\in\density{\complex^m\ot\complex^n}$, the set of all
bipartite quantum states acting on $\complex^m\ot\complex^n$, cannot
have more than $(m-1)(n-1)$ number of negative eigenvalues and all
eigenvalues of $\rho^{\t_A}_{AB}$ fall within $[-\tfrac12,1]$
\cite{Rana2013pra}. Using the second PT-moment to bound the third
one is an interesting question. Moreover, we find that this method
can be used to get a characterization of maximally entangled
two-qubit states, that is, e.g., a two-qubit state is maximally
entangled if and only if the minimal eigenvalue of its
partial-transposed state is $-\frac12$, this is also equivalently to
the condition that $\bsp^{(4)}=(1,1,\tfrac14,\tfrac14)$, the
PT-moment vector of the two-qubit state $\rho_{AB}$. This amounts to
give a criterion of maximally entangled states using PT-moment
vector whose components are measurable quantities.

%==============================================================%

\section{Main result}

In this section, our question can be essentially asked: Is
$\rho_{AB}$ maximally entangled if
$\rho_{AB}\in\density{\mC^2\ot\mC^2}$ and the minimal eigenvalue of
its partial-transposed state $\rho^{\t_A}_{AB}$ is
$\lambda_{\min}\Pa{\rho^{\t_A}_{AB}}=-\frac12$? We give the positive
answer to this question in our main result , i.e.,
Theorem~\ref{th:A1}. To that end, we obtain the proof by showing a
series of propositions.

\begin{prop}
Let $\rho_{AB},\sigma_{AB},\tau_{AB}\in\density{\mC^m\ot\mC^n}$,
where $\rho_{AB}=t\sigma_{AB}+(1-t)\tau_{AB}$ for some $t\in(0,1)$.
If $\lambda_{\min}\Pa{\rho^{\t_A}_{AB}}=-\frac12$, i.e., the minimal
eigenvalue of $\rho^{\t_A}_{AB}$, then
$\lambda_{\min}\Pa{\sigma^{\t_A}_{AB}}=\lambda_{\min}\Pa{\tau^{\t_A}_{AB}}=-\frac12$.
\end{prop}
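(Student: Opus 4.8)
The plan is to exploit the linearity of the partial transpose together with Rana's universal lower bound on the spectrum. First I would observe that, since $X \mapsto X^{\t_A}$ is a linear map on Hermitian operators, the convex decomposition $\rho_{AB} = t\sigma_{AB} + (1-t)\tau_{AB}$ transports directly to
\[
\rho^{\t_A}_{AB} = t\,\sigma^{\t_A}_{AB} + (1-t)\,\tau^{\t_A}_{AB}.
\]
By Rana's result quoted in the introduction, every eigenvalue of a partial-transposed bipartite state lies in $[-\tfrac12,1]$; in particular $\lambda_{\min}\Pa{\sigma^{\t_A}_{AB}} \geq -\tfrac12$ and $\lambda_{\min}\Pa{\tau^{\t_A}_{AB}} \geq -\tfrac12$. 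Thus both candidate minima are already pinned from below, and the entire content of the proposition reduces to showing that neither can strictly exceed $-\tfrac12$.

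For this I would use the variational (Rayleigh-quotient) characterization $\lambda_{\min}(X) = \min_{\|\phi\|=1}\bra{\phi}X\ket{\phi}$ for Hermitian $X$. Let $\ket{\psi}$ be a unit eigenvector of $\rho^{\t_A}_{AB}$ achieving its minimal eigenvalue $-\tfrac12$. Evaluating the linear identity above in this state gives
\[
-\tfrac12 = \bra{\psi}\rho^{\t_A}_{AB}\ket{\psi} = t\,\bra{\psi}\sigma^{\t_A}_{AB}\ket{\psi} + (1-t)\,\bra{\psi}\tau^{\t_A}_{AB}\ket{\psi}.
\]
Each quadratic form obeys $\bra{\psi}\sigma^{\t_A}_{AB}\ket{\psi} \geq \lambda_{\min}\Pa{\sigma^{\t_A}_{AB}} \geq -\tfrac12$, and likewise for $\tau$. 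Since $t,1-t>0$, a convex combination of two numbers each at least $-\tfrac12$ equals $-\tfrac12$ only when both numbers equal $-\tfrac12$ (writing the excesses as $\alpha,\beta\ge 0$ yields $t\alpha+(1-t)\beta=0$, forcing $\alpha=\beta=0$). Hence $\bra{\psi}\sigma^{\t_A}_{AB}\ket{\psi}=\bra{\psi}\tau^{\t_A}_{AB}\ket{\psi}=-\tfrac12$, and combining each of these with the corresponding lower bound gives $\lambda_{\min}\Pa{\sigma^{\t_A}_{AB}}=\lambda_{\min}\Pa{\tau^{\t_A}_{AB}}=-\tfrac12$, as claimed.

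The argument is in essence a single convexity observation once Rana's interval is invoked, so I do not expect a genuine analytic obstacle. The one point deserving care is that the \emph{same} witness $\ket{\psi}$ — the minimal eigenvector of the mixture — is fed into the quadratic forms for both $\sigma_{AB}$ and $\tau_{AB}$ simultaneously; it is precisely this shared extremal vector, together with the two-sided pinning at $-\tfrac12$, that propagates extremality from the mixture to each component. Everything else is routine, and the proof is dimension-independent, holding for arbitrary $\density{\mC^m\ot\mC^n}$.
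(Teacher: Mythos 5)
Your proposal is correct and follows essentially the same route as the paper's own proof: invoke Rana's bound $\lambda_{\min}\geq-\tfrac12$ for each component, then evaluate the linear identity $\rho^{\t_A}_{AB}=t\,\sigma^{\t_A}_{AB}+(1-t)\,\tau^{\t_A}_{AB}$ on the minimal eigenvector of $\rho^{\t_A}_{AB}$ and use the convex-combination pinning argument. The only difference is presentational — you make the linearity of $X\mapsto X^{\t_A}$ and the strict positivity of the weights $t,1-t$ explicit, which the paper leaves implicit.
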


\begin{proof}
Using the main result in \cite{Rana2013pra}, we see that, for any
bipartite state $\varrho_{AB}\in\density{\mC^m\ot\mC^n}$, we have
that
\begin{eqnarray*}
\lambda_{\min}(\varrho^{\t_A}_{AB})\in\Br{-\tfrac12,1}.
\end{eqnarray*}
Thus
\begin{eqnarray*}
\lambda_{\min}(\sigma^{\t_A}_{AB})\geqslant-\frac12,\quad
\lambda_{\min}(\tau^{\t_A}_{AB})\geqslant-\frac12.
\end{eqnarray*}
As already known, there exists a pure state
$\ket{\psi_0}\in\mC^m\ot\mC^n$, corresponding to the minimal
eigenvalue $\lambda_{\min}\Pa{\rho^{\t_A}_{AB}}$, such that
\begin{eqnarray*}
-\frac12&=&\lambda_{\min}\Pa{\rho^{\t_A}_{AB}} =
\Innerm{\psi_0}{\rho^{\t_A}_{AB}}{\psi_0} = t
\Innerm{\psi_0}{\sigma^{\t_A}_{AB}}{\psi_0}+
(1-t)\Innerm{\psi_0}{\tau^{\t_A}_{AB}}{\psi_0}\\
&\geqslant&t
\lambda_{\min}\Pa{\sigma^{\t_A}_{AB}}+(1-t)\lambda_{\min}\Pa{\tau^{\t_A}_{AB}}\geqslant
-\frac12.
\end{eqnarray*}
We must have that
$\lambda_{\min}\Pa{\sigma^{\t_A}_{AB}}=\Innerm{\psi_0}{\sigma^{\t_A}_{AB}}{\psi_0}=\lambda_{\min}\Pa{\tau^{\t_A}_{AB}}=\Innerm{\psi_0}{\tau^{\t_A}_{AB}}{\psi_0}=-\frac12$.
\end{proof}

\begin{cor}
Suppose $\rho_{AB}\in\density{\mC^m\ot\mC^n}$ has the pure state
decomposition: $\rho_{AB}=\sum_k\lambda_k\proj{\psi_k}$, where
$\lambda_k>0$ for all indices $k$. If
$\lambda_{\min}\Pa{\rho^{\t_A}_{AB}} =-\frac12$, then
\begin{eqnarray*}
\lambda_{\min}\Pa{\psi^{\t_A}_k} = - \frac12.
\end{eqnarray*}
Here $\psi_k:=\proj{\psi_k}$.
\end{cor}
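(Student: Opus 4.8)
The plan is to deduce the corollary directly from the preceding Proposition, applied separately to each index $k$. The idea is to peel off the single term $\lambda_k\proj{\psi_k}$ from the given convex decomposition and absorb the remaining terms into one auxiliary state, thereby exhibiting $\rho_{AB}$ as a genuine two-term convex combination to which the Proposition applies.

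Concretely, fix an arbitrary index $k$. First I would dispose of the degenerate case: if the decomposition has only one term, then $\lambda_k=1$ and $\rho_{AB}=\proj{\psi_k}$, so $\lambda_{\min}\Pa{\psi^{\t_A}_k}=\lambda_{\min}\Pa{\rho^{\t_A}_{AB}}=-\tfrac12$ and nothing remains to prove. Otherwise there are at least two strictly positive weights summing to one, forcing $0<\lambda_k<1$ for every $k$. I then set $t:=\lambda_k\in(0,1)$, $\sigma_{AB}:=\proj{\psi_k}$, and
\begin{eqnarray*}
\tau_{AB}:=\frac{1}{1-\lambda_k}\sum_{j\neq k}\lambda_j\proj{\psi_j},
\end{eqnarray*}
so that $\rho_{AB}=t\,\sigma_{AB}+(1-t)\,\tau_{AB}$ by construction.

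Before invoking the Proposition I would verify that $\tau_{AB}$ is a legitimate state in $\density{\mC^m\ot\mC^n}$: it is a nonnegative combination of rank-one projectors, hence positive semidefinite, and its trace equals $\tfrac{1}{1-\lambda_k}\sum_{j\neq k}\lambda_j=1$ since $\sum_j\lambda_j=1$. With $\sigma_{AB},\tau_{AB}\in\density{\mC^m\ot\mC^n}$ and $t\in(0,1)$, and using the hypothesis $\lambda_{\min}\Pa{\rho^{\t_A}_{AB}}=-\tfrac12$, the Proposition immediately yields $\lambda_{\min}\Pa{\sigma^{\t_A}_{AB}}=\lambda_{\min}\Pa{\psi^{\t_A}_k}=-\tfrac12$. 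Since $k$ was arbitrary, the claim follows for all $k$.

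There is essentially no obstacle here beyond the two bookkeeping checks just described, namely that the residual mixture $\tau_{AB}$ is a valid density operator and that the extracted weight $t=\lambda_k$ lies in the open interval $(0,1)$ --- the latter being exactly the hypothesis required to apply the Proposition. The only point demanding separate care is the single-term (pure) case, which is handled trivially above.
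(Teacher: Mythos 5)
Your proposal is correct and follows exactly the route the paper intends: the paper states this as an immediate corollary of the preceding proposition, obtained by splitting off the term $\lambda_k\proj{\psi_k}$ and renormalizing the remainder into a single state, precisely as you do. Your explicit verification that $\tau_{AB}$ is a valid density operator and your separate treatment of the single-term case are sound bookkeeping that the paper leaves implicit.
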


Recall that there is a correspondence between the set
$\lin{\cY,\cX}$ of all linear operators from a finite-dimensional
Hilbert space $\cY$ to another finite-dimensional Hilbert space
$\cX$. It can be explained immediately. Denote by $\cX\ot\cY$ the
tensor space of $\cX$ and $\cY$. Let the orthonormal bases of $\cX$
and $\cY$ be $\Set{\ket{i}:i=1,\ldots,\dim(\cX)}$ and
$\Set{\ket{j}:j=1,\ldots,\dim(\cY)}$, respectively. The mentioned
correspondence between $\lin{\cY,\cX}$ and $\cX\ot\cY$ is defined by
the linear mapping $\vec:\lin{\cY,\cX}\to \cX\ot\cY$ via
$\vec(\out{i}{j})=\ket{ij}$ for all $i,j$ \cite{Watrous2018}.

Let $\ket{\psi}\in\mC^d\ot\mC^d$ be a bipartite pure state. Then
there is an $d\times d$ complex matrix $\bsX$ such that
$\ket{\psi}=\vec(\bsX)$. By Singular Value Decomposition
($\mathrm{SVD}$), there are two unitary matrices
$\bsU,\bsV\in\mathsf{U}(d)$ such that $\bsX=\bsU\Sigma\bsV^\dagger$,
where $\Sigma=\diag(\sigma_1,\ldots,\sigma_r,\ldots,\sigma_d)$ for
$\sigma_1\geqslant\cdots\geqslant\sigma_d\geqslant0$ and
$r=\rank(\bsX)\leqslant d$. Note that $\sum^d_{j=1}\sigma^2_j=1$.
Then
\begin{eqnarray*}
\proj{\psi} =
\bsU\ot\overline{\bsV}\vec(\Sigma)\vec(\Sigma)^\dagger(\bsU\ot\overline{\bsV})^\dagger
\end{eqnarray*}
implying that
\begin{eqnarray*}
\proj{\psi}^{\t_A} =
\overline{\bsU}\ot\overline{\bsV}\Pa{\vec(\Sigma)\vec(\Sigma)^\dagger}^{\t_A}(\overline{\bsU}\ot\overline{\bsV})^\dagger.
\end{eqnarray*}
Due to the fact that $\Sigma=\sum^d_{i=1}\sigma_i\proj{i}$, we see
that
\begin{eqnarray*}
\vec(\Sigma)\vec(\Sigma)^\dagger =
\sum^d_{i,j=1}\sigma_i\sigma_j\out{ij}{ij},\quad
\Pa{\vec(\Sigma)\vec(\Sigma)^\dagger}^{\t_A} =
\sum^d_{i,j=1}\sigma_i\sigma_j\out{ji}{ij}
\end{eqnarray*}
\begin{prop}
All eigenvalues of $\proj{\psi}^{\t_A}$ is given by
$\Set{\sigma^2_1,\ldots,\sigma^2_d; \pm\sigma_i\sigma_j(1\leqslant
i< j\leqslant d)}$.
\end{prop}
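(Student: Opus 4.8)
The plan is to exploit the unitary equivalence already exhibited in the excerpt: since $\overline{\bsU}\ot\overline{\bsV}$ is unitary, $\proj{\psi}^{\t_A}$ and the explicit operator $M:=\Pa{\vec(\Sigma)\vec(\Sigma)^\dagger}^{\t_A}=\sum_{i,j=1}^d\sigma_i\sigma_j\out{ji}{ij}$ share the same spectrum. Hence it suffices to diagonalize $M$ directly, which sidesteps any dependence on $\bsU,\bsV$.

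First I would record the action of $M$ on the product basis $\Set{\ket{kl}:1\leqslant k,l\leqslant d}$ of $\mC^d\ot\mC^d$. A direct computation gives $M\ket{kl}=\sigma_k\sigma_l\ket{lk}$, so $M$ merely swaps the two tensor factors and rescales by $\sigma_k\sigma_l$. The key structural observation is that this swap action leaves invariant the orthogonal decomposition of $\mC^d\ot\mC^d$ into the $d$ one-dimensional ``diagonal'' subspaces $\spn\Set{\ket{kk}}$ together with the $\binom{d}{2}$ two-dimensional subspaces $\spn\Set{\ket{kl},\ket{lk}}$ indexed by pairs $k<l$.

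Next I would diagonalize $M$ block by block. On each diagonal vector one reads off $M\ket{kk}=\sigma_k^2\ket{kk}$, contributing the eigenvalues $\sigma_1^2,\ldots,\sigma_d^2$. On each two-dimensional block with $k<l$, the operator $M$ is represented in the basis $\Set{\ket{kl},\ket{lk}}$ by the antidiagonal matrix whose two off-diagonal entries both equal $\sigma_k\sigma_l$, so its eigenvalues are $\pm\sigma_k\sigma_l$ with eigenvectors $\tfrac1{\sqrt2}\Pa{\ket{kl}\pm\ket{lk}}$. This produces precisely $\pm\sigma_i\sigma_j$ for every pair $i<j$. Finally I would verify the count: the listed multiset has $d+2\binom{d}{2}=d^2$ entries, matching $\dim(\mC^d\ot\mC^d)$, so nothing has been missed and these are exactly all the eigenvalues.

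I do not anticipate a genuine obstacle here; the only point requiring care is the bookkeeping in checking that the diagonal and swap-pair subspaces are truly $M$-invariant and mutually orthogonal, after which the spectrum drops out of trivial $1\times1$ and $2\times2$ diagonalizations. One minor subtlety worth flagging is that when some singular values coincide or vanish the listed eigenvalues acquire multiplicities (and additional zeros), but this does not affect the statement, which is understood as an enumeration of the spectrum as a multiset.
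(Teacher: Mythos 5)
Your proof is correct, but it takes a genuinely different route from the paper's. You diagonalize $\bsF=\sum_{i,j}\sigma_i\sigma_j\out{ji}{ij}$ directly: from $\bsF\ket{kl}=\sigma_k\sigma_l\ket{lk}$ you decompose $\mC^d\ot\mC^d$ into the swap-invariant blocks $\spn\Set{\ket{kk}}$ and $\spn\Set{\ket{kl},\ket{lk}}$ for $k<l$, read off $\sigma_k^2$ on the one-dimensional blocks and $\pm\sigma_k\sigma_l$ on the $2\times2$ antidiagonal blocks, and close the argument with the count $d+2\binom{d}{2}=d^2$. The paper instead computes $\bsF\bsF^\dagger=\sum_{i,j}(\sigma_i\sigma_j)^2\out{ji}{ji}$, hence $\abs{\bsF}=\sum_{i,j}\sigma_i\sigma_j\out{ji}{ji}$, and then feeds $\Tr{\bsF}=\sum_i\sigma_i^2$ and $\Tr{\abs{\bsF}}=\sum_{i,j}\sigma_i\sigma_j$ into the Jordan decomposition $\bsF=\bsF_+-\bsF_-$ to obtain $\Tr{\bsF_+}$ and $\Tr{\bsF_-}$, from which it asserts the spectrum. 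Your version is arguably the more complete one: the paper's trace bookkeeping by itself only fixes the sums of the positive and negative parts (plus the singular values), so extracting the exact signed multiset from it requires an extra step the paper leaves implicit, whereas your block diagonalization determines the multiset unambiguously, absorbs repeated or vanishing singular values without fuss, and additionally exhibits the eigenvectors $\tfrac{1}{\sqrt{2}}\Pa{\ket{kl}\pm\ket{lk}}$ --- precisely the structural information that reappears later in the paper, where the proof of Theorem~\ref{th:A1} uses the eigenvector $\frac{\ket{01}-\ket{10}}{\sqrt{2}}$ of the swap operator for the eigenvalue $-1$. What the paper's route buys in exchange is brevity and a reusable trick (compute $\abs{\bsF}$ and compare traces) that avoids identifying invariant subspaces altogether.
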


\begin{proof}
Let $\bsF =
\Pa{\vec(\Sigma)\vec(\Sigma)^\dagger}^{\t_A}=\sum^d_{i,j=1}\sigma_i\sigma_j\out{ji}{ij}$.
Then
\begin{eqnarray*}
\bsF\bsF^\dagger
&=&\Pa{\sum^d_{i,j=1}\sigma_i\sigma_j\out{ji}{ij}}\Pa{\sum^d_{k,l=1}\sigma_k\sigma_l\out{lk}{kl}}^\dagger=
\sum^d_{i,j,k,l=1}\sigma_i\sigma_j\sigma_k\sigma_l\out{ji}{ij}\cdot\out{kl}{lk}\\
&=&\sum^d_{i,j,k,l=1}\delta_{ik}\delta_{jl}\sigma_i\sigma_j\sigma_k\sigma_l\out{ji}{lk}=\sum^d_{i,j=1}(\sigma_i\sigma_j)^2\out{ji}{ji},
\end{eqnarray*}
that is, $\abs{\bsF}=\sqrt{\bsF\bsF^\dagger}
=\sum^d_{i,j=1}\sigma_i\sigma_j\out{ji}{ji}$. Note that
\begin{eqnarray*}
\Tr{\bsF} = \sum^d_{i=1}\sigma^2_i,\quad \Tr{\abs{\bsF}} =
\sum^d_{i,j=1}\sigma_i\sigma_j.
\end{eqnarray*}
By using the Jordan decomposition $\bsF=\bsF_+-\bsF_-$, where
$\bsF^\dagger_\pm=\bsF_\pm\geqslant0$ and
$\bsF_+\bsF_-=0=\bsF_-\bsF_+$. Then
\begin{eqnarray*}
\Tr{\bsF_+}-\Tr{\bsF_-} = \sum^d_{i=1}\sigma^2_i,\quad
\Tr{\bsF_+}+\Tr{\bsF_-} = \sum^d_{i,j=1}\sigma_i\sigma_j,
\end{eqnarray*}
and
\begin{eqnarray*}
\Tr{\bsF_+} =
\sum^d_{i=1}\sigma^2_i+\sum_{i<j}\sigma_i\sigma_j,\quad \Tr{\bsF_-}
= \sum_{i<j}\sigma_i\sigma_j.
\end{eqnarray*}
Therefore all eigenvalues of $\bsF$ is given by
$\Set{\sigma^2_1,\ldots,\sigma^2_d; \pm\sigma_i\sigma_j(1\leqslant
i< j\leqslant d)}$.
\end{proof}

Thus, we need to characterize those bipartite pure states with the
minimal eigenvalue of its partial transposed state being
$-\tfrac12$.

\begin{prop}
If $\ket{\psi}\in\mC^2\ot\mC^2$ is a pure state, then
$\lambda_{\min}\Pa{\psi^{\t_A}} = - \frac12$, where
$\psi\equiv\proj{\psi}$, if and only if $\ket{\psi}$ is a maximally
entangled state, i.e., $\ket{\psi}$ is proportional to the locally
unitarily rotation of the vector $\vec(\mathbb{I}_2)$. Here
$\mathbb{I}$ is the identity operator.
\end{prop}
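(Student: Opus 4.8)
The plan is to specialize the eigenvalue formula of the preceding proposition to $d=2$ and then read off the minimal eigenvalue directly. For a pure state $\ket{\psi}\in\mC^2\ot\mC^2$ with Schmidt coefficients $\sigma_1\geqslant\sigma_2\geqslant0$ satisfying $\sigma^2_1+\sigma^2_2=1$, that proposition lists the four eigenvalues of $\psi^{\t_A}$ as $\sigma^2_1$, $\sigma^2_2$, $\sigma_1\sigma_2$, and $-\sigma_1\sigma_2$. Since $\sigma_1,\sigma_2\geqslant0$, the only value that can be negative is $-\sigma_1\sigma_2$, so I would immediately conclude that $\lambda_{\min}\Pa{\psi^{\t_A}}=-\sigma_1\sigma_2$.

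The condition $\lambda_{\min}\Pa{\psi^{\t_A}}=-\tfrac12$ thus becomes $\sigma_1\sigma_2=\tfrac12$. Under the normalization constraint $\sigma^2_1+\sigma^2_2=1$, the arithmetic–geometric mean inequality gives $\sigma_1\sigma_2\leqslant\tfrac12\Pa{\sigma^2_1+\sigma^2_2}=\tfrac12$, with equality precisely when $\sigma_1=\sigma_2$. Combined with the constraint, equality forces $\sigma_1=\sigma_2=\tfrac1{\sqrt2}$. This settles the ``only if'' direction: $\lambda_{\min}\Pa{\psi^{\t_A}}=-\tfrac12$ compels both Schmidt coefficients to equal $\tfrac1{\sqrt2}$.

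It then remains to translate the equal-Schmidt-coefficient condition into the claimed maximally entangled form. When $\sigma_1=\sigma_2=\tfrac1{\sqrt2}$ we have $\Sigma=\tfrac1{\sqrt2}\mathbb{I}_2$, so the matrix $\bsX$ from the SVD satisfies $\bsX=\tfrac1{\sqrt2}\bsU\bsV^\dagger$. Applying the vectorization identity $\vec\Pa{\bsU\Sigma\bsV^\dagger}=\Pa{\bsU\ot\overline{\bsV}}\vec(\Sigma)$ already used above, I obtain $\ket{\psi}=\vec(\bsX)=\tfrac1{\sqrt2}\Pa{\bsU\ot\overline{\bsV}}\vec(\mathbb{I}_2)$, which exhibits $\ket{\psi}$ as proportional to a local unitary rotation of $\vec(\mathbb{I}_2)$, i.e., a maximally entangled state. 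The converse is straightforward: local unitaries preserve the set of singular values of $\bsX$, hence the Schmidt coefficients, so any state proportional to $\Pa{\bsU\ot\overline{\bsV}}\vec(\mathbb{I}_2)$ has $\sigma_1=\sigma_2=\tfrac1{\sqrt2}$, whence $\lambda_{\min}\Pa{\psi^{\t_A}}=-\sigma_1\sigma_2=-\tfrac12$.

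The genuinely substantive content is carried by the preceding proposition; what is left here is elementary. I expect the only point requiring care to be the equality analysis in the arithmetic–geometric mean inequality together with the bookkeeping of the vectorization identity needed to present the answer in the requested local-unitary form, rather than any real difficulty.
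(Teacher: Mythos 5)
Your proof is correct, and it takes a genuinely more direct route than the paper's. The paper does not begin by reading off $\lambda_{\min}\Pa{\psi^{\t_A}}=-\sigma_1\sigma_2$ from the $d=2$ case of the preceding proposition; instead it first determines the \emph{entire} spectrum of $\psi^{\t_A}$: writing the eigenvalues as $1\geqslant x_1\geqslant x_2\geqslant x_3\geqslant x_4=-\frac12$, it invokes Rana's result to get $x_3\geqslant 0$, uses the pure-state moment identities $p_1=p_2=1$, and solves the resulting system to conclude the spectrum is $\set{\frac12,\frac12,\frac12,-\frac12}$; only then does it match this against the SVD spectrum $\set{s_0^2,s_1^2,s_0s_1,-s_0s_1}$ to force $s_0=s_1=\frac1{\sqrt2}$. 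Your AM--GM equality analysis replaces that entire moment computation: from $\lambda_{\min}=-\sigma_1\sigma_2$ and $\sigma_1\sigma_2\leqslant\frac12\Pa{\sigma_1^2+\sigma_2^2}=\frac12$, with equality iff $\sigma_1=\sigma_2$, the Schmidt coefficients are pinned down in one line, with no appeal to Rana's bound or to PT-moments. The concluding vectorization step, $\ket{\psi}=\frac1{\sqrt2}\Pa{\bsU\ot\overline{\bsV}}\vec(\mathbb{I}_2)$, is identical in both arguments. You also improve on the converse: the paper cites the literature, whereas you prove it directly by noting that local unitaries preserve singular values, so $\lambda_{\min}=-\sigma_1\sigma_2=-\frac12$, making your version self-contained. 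What the paper's longer route buys is the explicit full spectrum $\set{\frac12,\frac12,\frac12,-\frac12}$ (stated there even for mixed states with $\lambda_{\min}=-\frac12$), which is reused in the proof of Theorem~\ref{th:A1}; your shorter argument does not by itself supply that mixed-state spectral fact, though for the pure-state proposition at hand it is not needed.
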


\begin{proof}
Let $\ket{\psi}\in\mC^2\ot\mC^2$. Suppose
$\bsx=(x_1,x_2,x_3,x_4)\in\real^4$ is the eigenvalues of
$\psi^{\t_A}$ with $1\geqslant x_1\geqslant x_2\geqslant
x_3\geqslant x_4\geqslant-\frac12$ \cite{Rana2013pra}. Clearly
$x_4=\lambda_{\min}\Pa{\psi^{\t_A}}$.

Now let $x_4=-\frac12$. Again, we see that $x_3\geqslant0$ by Rana's
result. Denote $p_k=\Tr{\Br{\psi^{\t_A}}^k}$, where $k=1,2,\ldots$.
It is easily see that $p_1=1,p_2=1$. Then\
\begin{eqnarray*}
\begin{cases}
1=x_1+x_2+x_3+\Pa{-\frac12}\\
1=x^2_1+x^2_2+x^2_3+\Pa{-\frac12}^2.
\end{cases}
\end{eqnarray*}
Due to the constraint $1\geqslant x_1\geqslant x_2\geqslant
x_3\geqslant0$, the above system of equations has the unique
solution: $x_1=x_2=x_3=\frac12$.

What we now have proved is that if
$\lambda_{\min}\Pa{\psi^{\t_A}}=-\frac12$ for some pure state
$\ket{\psi}\in\mC^2\ot\mC^2$, then all eigenvalues of $\psi^{\t_A}$
is $\set{\frac12,\frac12,\frac12,-\frac12}$. In fact, if
$\lambda_{\min}\Pa{\rho^{\t_A}_{AB}}=-\frac12$ for some state
$\rho_{AB}\in\density{\mC^2\ot\mC^2}$, then all eigenvalues of
$\rho^{\t_A}_{AB}$ is $\set{\frac12,\frac12,\frac12,-\frac12}$.

For such pure state $\ket{\psi}\in \mC^2\ot\mC^2$, there exists a
$2\times 2$ complex matrix $\bsA$ such that $\ket{\psi}=\vec(\bsA)$.
By SVD of $\bsA$, we get that $\bsA=\bsU\bsD\bsV^\dagger$ where
$\bsD=\diag(s_0,s_1)$ with $s_0\geqslant s_1\geqslant0$ and
$\bsU,\bsV\in\U(2)$. Then
\begin{eqnarray*}
\proj{\psi} &=&
\bsU\ot\overline{\bsV}\vec(\bsD)\vec(\bsD)^\dagger\Pa{\bsU\ot\overline{\bsV}}^\dagger\\
&=&\bsU\ot\overline{\bsV}\vec(\bsD)\vec(\bsD)^\dagger
\bsU^\dagger\ot\bsV^\t,
\end{eqnarray*}
where
\begin{eqnarray*}
\vec(\bsD)\vec(\bsD)^\dagger =
\vec\Pa{\sum_is_i\proj{i}}\vec\Pa{\sum_js_j\proj{j}}^\dagger =
\sum^1_{i,j=0}s_is_j\out{ii}{jj}.
\end{eqnarray*}
Next we establish the equations concerning $(s_0,s_1)$. The first
one is $s^2_0+s^2_1=1$ due to the fact that
$\Tr{\bsD^2}=\Inner{\psi}{\psi}=1$. The second one is
\begin{eqnarray*}
\proj{\psi}^{\t_A} &=& \sum^1_{i,j=0}s_is_j
(\bsU\out{i}{j}\bsU^\dagger)^{\t}\ot\overline{\bsV}\out{i}{j}\bsV^\t\\
&=& \sum^1_{i,j=0}s_is_j
\overline{\bsU}\out{j}{i}\bsU^\t\ot\overline{\bsV}\out{i}{j}\bsV^\t\\
&=& \Pa{\overline{\bsU}\ot\overline{\bsV}}\sum^1_{i,j=0}s_is_j
\out{j}{i}\ot\out{i}{j}\Pa{\overline{\bsU}\ot\overline{\bsV}}^\dagger\\
&=&\Pa{\overline{\bsU}\ot\overline{\bsV}}\sum^1_{i,j=0}s_is_j
\out{ji}{ij}\Pa{\overline{\bsU}\ot\overline{\bsV}}^\dagger.
\end{eqnarray*}
Now both $\proj{\psi}^{\t_A}$ and $\sum^1_{i,j=0}s_is_j
\out{ji}{ij}$ has the same eigenvalues. That is, all eigenvalues of
\begin{eqnarray*}
\sum^1_{i,j=0}s_is_j \out{ji}{ij} = \Pa{
\begin{array}{cccc}
 s_1^2 & 0 & 0 & 0 \\
 0 & 0 & s_1 s_2 & 0 \\
 0 & s_1 s_2 & 0 & 0 \\
 0 & 0 & 0 & s_2^2 \\
\end{array}
}
\end{eqnarray*}
is
$\set{s^2_0,s^2_1,s_0s_1,-s_0s_1}=\set{\frac12,\frac12,\frac12,-\frac12}$.
This implies that
\begin{eqnarray*}
s^2_0=s^2_1=s_0s_1=\frac12.
\end{eqnarray*}
Its unique solution is given by
$(s_0,s_1)=\Pa{\frac1{\sqrt{2}},\frac1{\sqrt{2}}}$. Therefore
$\bsA=\frac1{\sqrt{2}}\bsU\bsV^\dagger$. Then
\begin{eqnarray*}
\ket{\psi} =\vec(\bsA) = \frac1{\sqrt{2}}\vec(\bsU\bsV^\dagger)=
\frac1{\sqrt{2}}\bsU\ot\overline{\bsV}\vec(\mathbb{\I}_2).
\end{eqnarray*}
We have proved that $\ket{\psi}$ is a maximally entangled state.
Conversely, if $\ket{\psi}$ is a maximally entangled state, then the
minimal eigenvalue of its partial-transposed state is apparently
$-\frac12$ \cite{Li2012,Zhao2015}.
\end{proof}
For the partial-transposed maximally entangled states in
$\complex^n\ot\complex^n$, the eigenvalues must be $\pm\frac1n$
where the multiplicities of $\frac1n$ and $-\frac1n$ are
$\frac{n(n+1)}2$ and $\frac{n(n-1)}{2}$, respectively. Thus its
PT-moment vector is given by
\begin{eqnarray*}
\bsp^{(n^2)} = (p_1,\ldots,p_{n^2}),\quad p_k =
\frac{(n+1)+(n-1)(-1)^k}{2n^{k-1}}.
\end{eqnarray*}
In particular, for the case where $n=2$, the PT-moment vector
$\bsp^{(4)} = (1,1,\tfrac14,\tfrac14)$.

\begin{thrm}\label{th:A1}
Let $\rho_{AB}\in\density{\mC^2\ot\mC^2}$ be a quantum state, then
the following statements are equivalent:
\begin{enumerate}[(i)]
\item the PT-moment vector of $\rho_{AB}$ is $\bsp^{(4)}=
(1,1,\tfrac14,\tfrac14)$.
\item $\rho_{AB}$ must be maximally entangled.
\end{enumerate}
\end{thrm}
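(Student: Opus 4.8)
The plan is to establish the two implications separately, drawing on the propositions already proved. The direction (ii)~$\Rightarrow$~(i) is the computation recorded just before the theorem: a maximally entangled two-qubit state is pure with equal Schmidt coefficients $\sigma_1=\sigma_2=\tfrac1{\sqrt2}$, so the proposition describing the spectrum of $\proj{\psi}^{\t_A}$ gives eigenvalues $\set{\tfrac12,\tfrac12,\tfrac12,-\tfrac12}$, and the displayed PT-moment formula at $n=2$ yields $\bsp^{(4)}=(1,1,\tfrac14,\tfrac14)$. No further work is required here.

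For the converse (i)~$\Rightarrow$~(ii), the decisive observation is that the second PT-moment is the purity. Since $\Tr{\Br{\rho^{\t_A}_{AB}}^2}=\Tr{\rho^2_{AB}}$, the hypothesis $p_2=1$ forces $\Tr{\rho^2_{AB}}=1$, and therefore $\rho_{AB}=\proj{\psi}$ is a pure state for some $\ket{\psi}\in\mC^2\ot\mC^2$. This is the crux of the matter: a priori $\rho_{AB}$ is an arbitrary density matrix, and it is precisely the value $p_2=1$ that excludes any mixed possibility and reduces the problem to the pure-state setting already analyzed.

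Once purity is secured, I would write the Schmidt coefficients of $\ket{\psi}$ as $\sigma_1\ge\sigma_2\ge0$ with $\sigma_1^2+\sigma_2^2=1$, so that by the proposition on $\proj{\psi}^{\t_A}$ the eigenvalues are $\set{\sigma_1^2,\sigma_2^2,\sigma_1\sigma_2,-\sigma_1\sigma_2}$. Computing the third PT-moment gives $p_3=\sigma_1^6+\sigma_2^6=\Pa{\sigma_1^2+\sigma_2^2}^3-3\sigma_1^2\sigma_2^2\Pa{\sigma_1^2+\sigma_2^2}=1-3\sigma_1^2\sigma_2^2$; setting $p_3=\tfrac14$ forces $\sigma_1^2\sigma_2^2=\tfrac14$, which together with $\sigma_1^2+\sigma_2^2=1$ gives $\sigma_1^2=\sigma_2^2=\tfrac12$. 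Hence $\ket{\psi}$ has equal Schmidt coefficients and is maximally entangled. Equivalently, the extracted spectrum forces $\lambda_{\min}\Pa{\psi^{\t_A}}=-\tfrac12$, and one may instead invoke the proposition characterizing pure two-qubit states with minimal partial-transpose eigenvalue $-\tfrac12$.

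The only genuine obstacle is the purity step: everything after it is a routine symmetric-function identity already packaged in the earlier propositions, so the substance of the argument lies in recognizing that the constraint $p_2=1$ — innocuous-looking among the four moments — is exactly what collapses the class of arbitrary bipartite density matrices onto the pure maximally entangled states.
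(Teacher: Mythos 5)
Your proposal is correct, and its (i)~$\Rightarrow$~(ii) direction takes a genuinely different and shorter route than the paper's. The paper never exploits purity directly: it feeds all four moments through Newton's identities to get $e_1=1$, $e_2=0$, $e_3=-\tfrac14$, $e_4=-\tfrac1{16}$, factors the characteristic polynomial of $\rho^{\t_A}_{AB}$ as $\tfrac1{16}(2x-1)^3(2x+1)$ to conclude that the spectrum is $\set{\tfrac12,\tfrac12,\tfrac12,-\tfrac12}$, and then---since $\rho_{AB}$ may a priori be mixed---runs a rigidity argument: by the convexity proposition every component $\ket{\psi_k}$ of a decomposition $\rho_{AB}=\sum_k\lambda_k\proj{\psi_k}$ satisfies $\lambda_{\min}\Pa{\psi_k^{\t_A}}=-\tfrac12$ and hence is maximally entangled, after which a common minimizing eigenvector $\ket{\bsu}$, the simplicity of the eigenvalue $-1$ of the swap operator $\bsF$, and the invertibility of the matrix $\bsX$ with $\vec(\bsX)$ the singlet force all the unitaries $\bsW_k=\bsU_k\bsV_k^\dagger$ to coincide, collapsing the decomposition to a single pure state. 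Your observation that $p_2=\Tr{\rho^2_{AB}}=1$ forces purity outright (equality in $\Tr{\rho^2_{AB}}\leqslant 1$ holds iff $\rho_{AB}$ is pure, and the identity $\Tr{\Br{\rho^{\t_A}_{AB}}^2}=\Tr{\rho^2_{AB}}$ is recorded in the paper's introduction) short-circuits this entire convexity-plus-rigidity machinery; once purity is in hand, the proposition on the spectrum of $\proj{\psi}^{\t_A}$ together with $p_3=1-3\sigma_1^2\sigma_2^2=\tfrac14$ pins down $\sigma_1^2=\sigma_2^2=\tfrac12$, and you do not even need $p_4$. What the paper's longer route buys is the stronger standalone statement advertised in the abstract: its intermediate steps show that $\lambda_{\min}\Pa{\rho^{\t_A}_{AB}}=-\tfrac12$ \emph{alone}, for a possibly mixed two-qubit state, already implies maximal entanglement---a fact your purity trick, being tied to the moment hypothesis $p_2=1$, does not deliver.
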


\begin{proof}
For the implication (ii) $\Longrightarrow$ (i), the proof is
trivially. Next, we show that (i) implies (ii). Given (i). Let
$(x_1,x_2,x_3,x_4)$, where $x_1\geqslant x_2\geqslant x_3\geqslant
x_4$, be eigenvalues of the partial-transposed state
$\rho^{\t_A}_{AB}$, then Rana's result \cite{Rana2013pra} means that
$1\geqslant x_1\geqslant x_2\geqslant
x_3\geqslant\begin{cases}0\\x_4\geqslant-\frac12\end{cases}$. By the
given PT-moment vector, we see that
\begin{eqnarray*}
\begin{cases}
p_1 = x_1+x_2+x_3+x_4=1\\
p_2 = x^2_1+x^2_2+x^2_3+x^2_4=1\\
p_3 = x^3_1+x^3_2+x^3_3+x^3_4=\frac14\\
p_4 = x^4_1+x^4_2+x^4_3+x^4_4=\frac14
\end{cases}
\end{eqnarray*}
In fact, note that
\begin{eqnarray*}
e_k = \frac1{k!}\Abs{\begin{array}{ccccc}
             p_1 & 1 & 0 & \cdots & 0 \\
             p_2 & p_1 & 2 & \cdots & 0 \\
             \vdots & \vdots & \vdots & \ddots & \vdots \\
             p_{k-1} & p_{k-2} & p_{k-3} & \cdots & k-1\\
             p_k & p_{k-1} & p_{k-2} & \cdots & p_1
           \end{array}
}\quad (k\geqslant1),
\end{eqnarray*}
we see that $e_1=1,e_2=0,e_3=-\frac14,e_4=-\frac1{16}$. The
characteristic polynomial of $\rho^{\t_A}_{AB}$ is given by
\begin{eqnarray*}
f(x) &=& x^4-e_1x^3+e_2x^2-e_3x+e_4 = x^4-x^3+\frac14x-\frac1{16}\\
&=& \frac1{16} (2x-1)^3 (2x+1).
\end{eqnarray*}
Solving this system of equations via $f(x)=0$, we get that
$x_1=x_2=x_3=\frac12$ and
$x_4=\lambda_{\min}(\rho^{\t_A}_{AB})=-\frac12$. Next denote by
$\bsF=\sum^1_{i,j=0}\out{ji}{ij}$, then $\bsF\ket{ij}=\ket{ji}$ and
$\bsF^{\t_A}=\vec(\mathbb{I})\vec(\mathbb{I})^\dagger$. Note that
$\bsF\frac{\ket{01}-\ket{10}}{\sqrt{2}}=-\frac{\ket{01}-\ket{10}}{\sqrt{2}}$.
Take $\ket{\bsx}=\frac{\ket{01}-\ket{10}}{\sqrt{2}}$, and we get
that $\Innerm{\bsx}{\bsF}{\bsx}=-1=\lambda_{\min}(\bsF)$.

From the previous discussion, we see that in the pure state
decomposition of $\rho_{AB}$:
$\rho_{AB}=\sum^{N-1}_{k=0}\lambda_k\proj{\psi_k}$, where
$\lambda_0\geqslant\lambda_1\geqslant\cdots\geqslant\lambda_{N-1}\geqslant0$,
all pure state $\ket{\psi_k}$ must be maximally entangled state.
Then there exist a pure state $\ket{\bsu}$ such that
\begin{eqnarray*}
\Innerm{\bsu}{\psi^{\t_A}_k}{\bsu}=-\frac12\quad (k=0,1,\ldots,N-1).
\end{eqnarray*}
There exist $\bsU_k,\bsV_k\in\U(2)$ such that
\begin{eqnarray*}
\ket{\psi_k}=\frac1{\sqrt{2}}\vec(\bsU_k\bsV^\dagger_k)
=\frac1{\sqrt{2}}\bsU_k\bsV^\dagger_k\ot\mathbb{I}\vec(\mathbb{I})
=\frac1{\sqrt{2}}\bsW_k\ot\mathbb{I}\vec(\mathbb{I}),
\end{eqnarray*}
where $\bsW_k=\bsU_k\bsV^\dagger_k$, implying that $\psi^{\t_A}_k =
\frac12\Pa{\overline{\bsW}_k\ot\mathbb{I}}\bsF
\Pa{\overline{\bsW}_k\ot\mathbb{I}}^\dagger$. Now let
$\ket{\bsu_k}=\bsW^\t_k\ot\mathbb{I}\ket{\bsu}$. Then
\begin{eqnarray*}
-\frac12 =
\Innerm{\bsu}{\psi^{\t_A}_k}{\bsu}=\frac12\Innerm{\bsu_k}{\bsF}{\bsu_k}\quad
(k=0,1,\ldots,N-1).
\end{eqnarray*}
That is,
\begin{eqnarray*}
\lambda_{\min}(\bsF)=-1=\Innerm{\bsu_k}{\bsF}{\bsu_k}\quad
(k=0,1,\ldots,N-1).
\end{eqnarray*}
Because $-1$ is the simple eigenvalue of $\bsF$, the eigenspace
corresponding to $-1$ is just $\mC\ket{\bsx}$. This indicates that
all $\ket{\bsu_k}=e^{\mathrm{i}\theta_k}\ket{\bsx}$ due to the
normalization of $\ket{\bsu_k}$. Furthermore
\begin{eqnarray*}
\ket{\bsu} =
e^{\mathrm{i}\theta_k}\Pa{\overline{\bsW}_k\ot\mathbb{I}}\ket{\bsx}\quad
(k=0,1,\ldots,N-1).
\end{eqnarray*}
In fact, the phase factor $e^{\mathrm{i}\theta_k}$ can be absorbed
into the unitary matrix $\bsW_k$. Without loss of generality, we
assume that
\begin{eqnarray*}
\ket{\bsu} =\Pa{\overline{\bsW}_k\ot\mathbb{I}}\ket{\bsx}\quad
(k=0,1,\ldots,N-1).
\end{eqnarray*}
Because there is a matrix $\bsX$ such that $\ket{\bsx}=\vec(\bsX)$,
then
\begin{eqnarray*}
\bsX = \frac1{\sqrt{2}}(\out{0}{1}-\out{1}{0}) =
\frac1{\sqrt{2}}\Pa{\begin{array}{cc}
                      0 & 1 \\
                      -1 & 0
                    \end{array}
}.
\end{eqnarray*}
It is easily seen that $\bsX$ is invertible. We see that
\begin{eqnarray*}
\ket{\bsu} =\vec\Pa{\overline{\bsW}_k\bsX}\quad (k=0,1,\ldots,N-1).
\end{eqnarray*}
implying that $\overline{\bsW}_0\bsX = \overline{\bsW}_1\bsX
=\cdots= \overline{\bsW}_{N-1}\bsX$, i.e., due to the fact that
$\bsX$ is invertible, then $\bsW_0 = \bsW_1=\cdots= \bsW_{N-1}$, or
\begin{eqnarray*}
\bsU_0\bsV^\dagger_0 = \bsU_1\bsV^\dagger_1=\cdots=
\bsU_{N-1}\bsV^\dagger_{N-1},
\end{eqnarray*}
implying that
$\vec(\bsU_0\bsV^\dagger_0)=\vec(\bsU_1\bsV^\dagger_1)=\cdots=\vec(\bsU_{N-1}\bsV^\dagger_{N-1})$,
that is, $\ket{\psi_0}=\ket{\psi_1}=\cdots=\ket{\psi_{N-1}}$.
Therefore $\rho_{AB} = \sum_k\lambda_k\proj{\psi_k}=\proj{\psi_0}$
is a maximally entangled state.
\end{proof}

In fact, our main result Theorem~\ref{th:A1} tells us that the
PT-moment vector of a two-qubit state $\rho_{AB}$ is
$(1,1,\tfrac14,\tfrac14)$ iff the minimal eigenvalue of its
partial-transposed state $\rho^{\t_A}_{AB}$ is $-\tfrac12$ iff
$\rho_{AB}$ is maximally entangled. Naturally, we would expect a
similar relation between the magnitude of the lowest negative
eigenvalue of the partial-transposed state and the maximally
entangled states in higher-dimensional underlying spaces. However,
the following result, i.e., Proposition~\ref{prop:maxent}, indicates
that the minimal eigenvalue of the partial-transposed maximally
entangled state would approach zero when the dimension of the
underlying space becomes larger and larger. Indeed, after tedious
computations and induction, we can draw the following conclusion:
\begin{prop}\label{prop:maxent}
Let $\rho_{AB}\in\density{\mC^n\ot\mC^n}$ be a quantum state. If the
PT-moment vector of $\rho_{AB}$ is $\bsp^{(n^2)} =
(p_1,\ldots,p_{n^2})$, where $p_k =
\frac{(n+1)+(n-1)(-1)^k}{2n^{k-1}}$. Then
$\lambda_{\min}\Pa{\rho^{\t_A}_{AB}}=-\frac1n$.
\end{prop}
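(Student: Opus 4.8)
The plan is to avoid any direct spectral analysis of $\rho_{AB}$ and instead exploit the bijection between PT-moments and eigenvalues recalled in the introduction. Write $d=n^2$ for the number of eigenvalues of $\rho^{\t_A}_{AB}$ and let $(x_1,\ldots,x_d)$ denote them. Since we are handed exactly $d$ power sums $p_1,\ldots,p_d$ --- as many as there are eigenvalues --- Newton's identities between the $p_k$ and the elementary symmetric polynomials $e_k$ show that $\bsp^{(d)}$ determines $(e_1,\ldots,e_d)$ uniquely over $\real$, hence the characteristic polynomial $f(x)=x^d-e_1x^{d-1}+\cdots+(-1)^d e_d$, hence the multiset $\Set{x_1,\ldots,x_d}$. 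Consequently it is enough to produce a single multiset of $d$ real numbers whose power sums coincide with the prescribed $p_k$: that multiset is then forced to be the spectrum of $\rho^{\t_A}_{AB}$.

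First I would take as candidate the spectrum of the partial-transposed maximally entangled state recorded just before the statement, namely the value $\tfrac1n$ with multiplicity $\tfrac{n(n+1)}2$ together with the value $-\tfrac1n$ with multiplicity $\tfrac{n(n-1)}2$, which indeed comprises $\tfrac{n(n+1)}2+\tfrac{n(n-1)}2=n^2=d$ entries. A one-line check confirms that its $k$-th power sum is
\begin{eqnarray*}
\frac{n(n+1)}2\Pa{\frac1n}^k+\frac{n(n-1)}2\Pa{-\frac1n}^k = \frac{(n+1)+(n-1)(-1)^k}{2n^{k-1}} = p_k,
\end{eqnarray*}
so this multiset realizes the given PT-moment vector $\bsp^{(n^2)}$. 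By the uniqueness of the previous paragraph, the eigenvalues of $\rho^{\t_A}_{AB}$ must coincide with it; equivalently, $f(x)=\Pa{x-\tfrac1n}^{n(n+1)/2}\Pa{x+\tfrac1n}^{n(n-1)/2}$. In particular $\lambda_{\min}\Pa{\rho^{\t_A}_{AB}}=-\tfrac1n$, as claimed.

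The only delicate point is the uniqueness step, i.e.\ that $n^2$ power sums truly pin down $n^2$ eigenvalues; but this is exactly the regime in which the formulas for $e_k$ in terms of $p_k$ from the introduction are an honest bijection over a field of characteristic zero, so no gap arises. The alternative, purely computational route --- assembling $e_1,\ldots,e_{n^2}$ from the $p_k$ and verifying by induction that $f$ factors as $\Pa{x-\tfrac1n}^{n(n+1)/2}\Pa{x+\tfrac1n}^{n(n-1)/2}$ --- is the ``tedious computation and induction'' alluded to in the text, and is precisely what the bijection argument above lets us sidestep.
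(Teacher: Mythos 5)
Your proof is correct, and it is actually more complete than the paper's own argument. The paper's proof of this proposition is essentially an illustration: it works out only the two-qutrit case $n=3$, explicitly computing $(e_1,\ldots,e_9)$ from $(p_1,\ldots,p_9)$, factoring the characteristic polynomial as $\frac{(3x-1)^6(3x+1)^3}{19683}$, and then appealing to unspecified ``tedious computations and induction'' for general $n$. You sidestep that computation entirely: since $\rho^{\t_A}_{AB}$ is Hermitian its $n^2$ eigenvalues are real, the Newton-identities correspondence recalled in the introduction shows that the $n^2$ power sums determine $(e_1,\ldots,e_{n^2})$, hence the characteristic polynomial, hence the root multiset, and it therefore suffices to exhibit one multiset realizing the data --- namely $\tfrac1n$ with multiplicity $\tfrac{n(n+1)}2$ and $-\tfrac1n$ with multiplicity $\tfrac{n(n-1)}2$, whose $k$-th power sum is checked in one line to equal $\frac{(n+1)+(n-1)(-1)^k}{2n^{k-1}}=p_k$. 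This existence-plus-uniqueness route gives a rigorous proof uniform in $n$, with the only computation being the power-sum verification, and it even recovers the explicit factorization $f(x)=\Pa{x-\tfrac1n}^{n(n+1)/2}\Pa{x+\tfrac1n}^{n(n-1)/2}$ that the paper's elimination approach would produce only after the induction it never carries out. No gap arises in your uniqueness step: the $p_k\leftrightarrow e_k$ conversion is an honest bijection in characteristic zero, exactly as you say.
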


\begin{proof}
As an illustration, for a two-qutrit system
$\complex^3\ot\complex^3$ as an example, we get that
$(e_1,\ldots,e_9)$ from $\bsp^{(9)}=(p_1,\ldots,p_9)$, where
$p_1=p_2=1,p_3=p_4=\frac19,p_5=p_6=\frac1{81},p_7=p_8=\frac1{729}$
and $p_9=\frac1{6561}$. That is,
\begin{eqnarray*}
e_1=1,e_2=0,e_3=-\frac8{27},e_4=-\frac2{27},e_5=\frac2{81},\\
e_6=\frac8{729},e_7=0,e_8=-\frac1{2187},e_9=-\frac1{19683}.
\end{eqnarray*}
Furthermore, the characteristic polynomial of $\rho^{\t_A}_{AB}$ is
given by
\begin{eqnarray*}
f(x) &=& x^9-x^8+\frac{8 x^6}{27}-\frac{2 x^5}{27}-\frac{2
x^4}{81}+\frac{8 x^3}{729}-\frac{x}{2187}+\frac{1}{19683}\\
&=& \frac{(3 x-1)^6 (3 x+1)^3}{19683}.
\end{eqnarray*}
Thus we get that
$$
x_1=\cdots=x_6=\frac13,x_7=x_8=x_9=-\frac13.
$$
Therefore $\lambda_{\min}(\rho^{\t_A}_{AB})=-\frac13$.
\end{proof}
From the above result, when $n\to\infty$,
$\lambda_{\min}(\rho^{\t_A}_{AB})\to0$ for maximally entangled state
in $\complex^n\ot\complex^n$. Based on the observation, we can
conclude that the family of bipartite states with the minimal
eigenvalue of their partial-transposed states being $-\frac12$ is
different from the set of maximally entangled states when the
dimensions of the underlying spaces are larger than two. This also
indicates that the magnitude of the only lowest negative eigenvalue
of the partial-transposed state in higher-dimensional space would
not be enough to identify the maximally entangled state when there
are more than one negative eigenvalues. In fact, it is known that
for higher dimensions the characterization of entanglement can be
given by the so-called "negativity" \cite{Plenio2005}, which is
defined as the absolute values of the sum of all the negative
eigenvalues of the partial-transposed state. That is, the negativity
of $\rho_{AB}$ is given by $\cN(\rho_{AB}) =
\Abs{\sum_i\min\Set{\lambda_i(\rho^{\t_A}_{AB}),0}}$. With this
notion, our Theorem~\ref{th:A1} can be rewritten as: For two-qubit
state $\rho_{AB}$, $\cN(\rho_{AB})=\frac12$ iff $\rho_{AB}$ is
maximally entangled. The success of such characterization lies at
the possible number of negative eigenvalues being at most one. The
reason for failure of this result in high-dimensional space is that
only one negative eigenvalue (the lowest one) would not be enough to
characterize entanglement when there could be more than one negative
eigenvalue.

\section{Concluding remarks}

In this short note, we make an attempt to study the structure of a
family of bipartite states with the extremal eigenvalue being
$-\frac12$ of its partial-transposed states. We employ the approach
in studying PT-moments by Yu \emph{et al} recently, i.e., PT-moment
vectors, to get a characterization of maximally entangled two-qubit
states. In higher dimensional system, we are curious about the
problem that the PT-moment vector $\bsp^{(n^2)} =
(p_1,\ldots,p_{n^2})$, where $p_k =
\frac{(n+1)+(n-1)(-1)^k}{2n^{k-1}}$, generated by the maximally
entangled states, whether only corresponds the maximally entangled
states. Clearly, a bipartite state in
$\density{\complex^n\ot\complex^n}$ with
$\lambda_{\min}(\rho^{\t_A}_{AB})=-\frac12$ is, in general, not
maximally entangled state unless $n=2$. In the future research, we
will continue to figure out the structure of the mentioned family of
states, especially, find out the connection between such family of
states and the maximally entangled states in higher dimension.
Furthermore, we will study the connection between the entanglement
existing in bipartite states and the number of negative eigenvalues
of the corresponding partial-transposed states.\\~\\
\noindent\textbf{Author Contributions:} Writing--original draft,
J.D., L.Z. and Q.Q.; Writing--review \& editing, S.-M.F. All authors have read and agreed to the published version of the manuscript.\\~\\
\noindent\textbf{Funding:} This work is supported by the National
Natural Science Foundation of China under Grant Nos. (11971140,
12075159, 12171044); Beijing Natural Science Foundation (Grant No.
Z190005); Academy for Multidisciplinary Studies, Capital Normal
University; Shenzhen Institute for Quantum Science and Engineering,
Southern University of Science and Technology (No.
SIQSE202001), the Academician Innovation Platform of Hainan Province.\\~\\
\noindent\textbf{Conflicts of Interest:} The authors declare no
conflict of interest.

%--------------------------------------------------%

%--------------------------------------------------%
\end{document}